\DeclareMathAlphabet{\mathpzc}{OT1}{pzc}{m}{it}
\newif\ifAC@uppercase@first
\def\Aclp#1{\AC@uppercase@firsttrue\aclp{#1}\AC@uppercase@firstfalse}
\def\AC@aclp#1{%
  \ifcsname fn@#1@PL\endcsname
    \ifAC@uppercase@first
      \expandafter\expandafter\expandafter\MakeUppercase\csname fn@#1@PL\endcsname
    \else
      \csname fn@#1@PL\endcsname
    \fi
  \else
    \AC@acl{#1}s
  \fi 
}
\edef\AC@uppercase@write{\string\ifAC@uppercase@first\string\expandafter\string\MakeUppercase\string\fi\space}
\def\AC@acrodef#1[#2]#3{%
  \@bsphack
  \protected@write\@auxout{}{%
    \string\newacro{#1}[#2]{\AC@uppercase@write #3}%
  }\@esphack
}
\def\Acl#1{\AC@uppercase@firsttrue\acl{#1}\AC@uppercase@firstfalse}
\acrodef{HARQ}{hybrid automatic repeat request}
\acrodef{ARQ}{automatic repeat request}
\acrodef{CSI}{channel state information}
\acrodef{SNR}{signal to noise ratio}
\acrodef{ACK}{acknowledge}
\acrodef{NACK}{not acknowledge}
\acrodef{CRC}{code redundancy check}
\acrodef{MIMO}{multiple input multiple output}
\acrodef{S-HARQ}{secure \ac{HARQ}}
\acrodef{CDF}{cumulative distribution function}
\acrodef{IR-HARQ}{incremental redundancy \ac{HARQ}}
\newtheorem{lemma}{Lemma}
\newtheorem{theorem}{Theorem}
\newcommand\nframe{K}
\newcommand\iframe{k}
\newcommand\nslot{M}
\newcommand\islot{m}
\begin{document}
\sloppy
\title{Secret Message Transmission by HARQ \\ with Multiple Encoding}

\author{Stefano Tomasin and Nicola Laurenti \\
\small Department of Information Engineering,
University of Padova, Via Gradenigo 6/B, I-35131 Padova, Italy\\
\{tomasin, nil\}@dei.unipd.it
}

\maketitle

\begin{abstract}
Secure transmission between two agents, Alice and Bob, over block fading channels can be achieved similarly to conventional \ac{HARQ} by letting Alice transmit multiple blocks, each containing an encoded version of the secret message, until Bob informs Alice about successful decoding by a public error-free return channel. In existing literature each block is a differently punctured version of a single codeword generated with a Wyner code that uses a common randomness for all blocks. In this paper instead we propose a more general approach where multiple codewords are generated from independent randomnesses. The class of channels for which decodability and secrecy is ensured is characterized, with derivations for the existence of secret codes. We show in particular that the classes are not a trivial subset (or superset) of those of existing schemes, thus highlighting the novelty of the proposed solution. The result is further confirmed by deriving the average achievable secrecy throughput, thus taking into account both decoding and secrecy outage.
\end{abstract}
\acresetall
\begin{keywords}
\Acl{HARQ}, Physical Layer Security, Secret Message Transmission.
\end{keywords}
\acresetall

\section{Introduction}
Physical layer secrecy has gained a lot of attention in the last few years, due to its ability of providing information theoretic unconditional security, thus adding security at the physical layer. From the seminal works \cite{Wyner1975, Csiszar1978, Leung-Yan-Cheong1978} performance limits and achievable rates have been derived in different scenarios for the reliable yet secret transmission of confidential information (see, e.g. \cite{Bloch} for a review). In particular, it has been shown that diversity, in the form of fading (temporal diversity), multipath (frequency diversity) or \ac{MIMO} (spatial diversity) is definitely beneficial to secret transmission. In fact, dimensions or instants in which the legitimate receiver is at an advantage with respect to the eavesdropper can be selected, even if the channels to both receivers have the same statistics. However, one of the main obstacles to the effective implementation of such systems is the need of knowing \ac{CSI} towards both the legitimate receiver and the eavesdropper at the time of code design. 
 
This drawback can be partly mitigated by the presence of a feedback channel, even with a limited rate and/or publicly accessible, which, contrary to the unconstrained transmission case, has been shown to increase secrecy capacity \cite{Ardestanizadeh2009}.
The \ac{ARQ} mechanism, with its intrinsic one-bit feedback is leveraged in \cite{Abdallah2011} for the secure generation of cryptographic keys, that can either be used in traditional encryption systems or provide perfect secrecy through one-time-pad schemes.
Instead of considering a simple retransmission approach, a \ac{HARQ}-like scheme is derived in \cite{Tang2009} for secure communications over a block-fading Gaussian channel. In this case, a single codeword is generated and punctured versions of it are transmitted until the legitimate receiver decodes the secret message. For encoding, a wiretap code with incremental redundancy is employed, obtaining an \ac{IR-HARQ} scheme, and an outage formulation is considered. The approach of \cite{Tang2009} is then extended in \cite{Sarikaya2012} to a multiuser uplink scenario, with each user aiming at transmitting a combination of public and confidential messages to a single base station, with the other users acting potentially as eavesdroppers. A suboptimal strategy of power allocation and scheduling to maximize the overall network utility is then derived. The \ac{HARQ} secrecy scenario is also considered in \cite{Baldi2012} with a rather different approach, where standard codes are used with the addition of scrambling, but secrecy is expressed in terms of the bit error probability at the eavesdropper. Lastly, the throughput of \ac{HARQ} without secrecy constraints has been studied in \cite{Caire-jul01}, but the analysis does not fit immediately the different scenario where a secret message must be transmitted. 
 
In this paper we consider a scenario similar to \ac{IR-HARQ} of \cite{Tang2009}, where transmissions occur on block fading channels. While for \ac{IR-HARQ} at each retransmission a different puncturing of a single codeword of a Wyner code is used, we propose to encode the secret message with multiple codes and then send punctured versions of the multiple codewords until Bob decodes. The obtained solution is denoted \ac{S-HARQ}, and is a strict generalization of \ac{IR-HARQ}. We prove the existence of codes that ensure both decodability and secrecy for all channel realizations that satisfy certain conditions. While decodability and secrecy in \ac{IR-HARQ} depend on the {\em average over all channel realization} of the mutual information between the transmitted and received messages, in \ac{S-HARQ} instead performance depends on {\em multiple averages} over different fading blocks, thus providing additional degrees of freedom\footnote{Notation, throughout the paper ${\rm E}_X[g(X)]$ denotes the expectation of $g(X)$ with respect to $X$, and ${\rm I}(X; Y)$ denotes the mutual information between $X$ and $Y$.}.

\section{System Model}

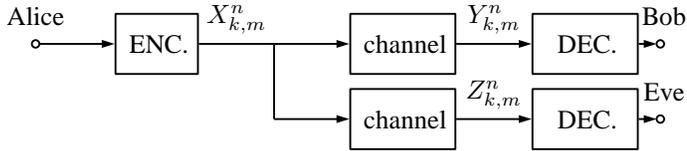
\begin{figure}
\centering
\begin{tikzpicture}
	\node[dspnodeopen] (Al) {Alice};
	\node[dspsquare, right= of Al]       (c0) {\, ENC. };
	\node[coordinate,right= of c0]       (c1) {};	
	\node[dspsquare,right= of c1]                    (c2) {\, channel };
	\node[dspfilter,right=of c2]                     (c3) {\, DEC. };
	\node[dspnodeopen, right of=c3] (A) {\;Bob};
	\node[coordinate,below= of c1]       (c1b) {};	
	\node[dspsquare,right= of c1b]                    (c4) {\, channel };
	\node[dspfilter,right=of c4]                     (c5) {\, DEC. };
	\node[dspnodeopen, right of=c5] (E) {\;Eve};
    \draw[dspconn] (Al) --  (c0);
    \draw[dspline] (c0) -- node[midway, above] {$X_{\iframe,\islot}^n$} (c1) ;
    \draw[dspconn] (c1) -- (c2);
    \draw[dspconn] (c2) -- node[midway,above] {$Y_{\iframe,\islot}^n$}(c3);
    \draw[dspline] (c1) -- (c1b);
    \draw[dspconn] (c1b) -- (c4);
    \draw[dspconn] (c4) -- node[midway,above] {$Z_{\iframe,\islot}^n$} (c5);
    \draw[dspconn] (c3) -- (A);
    \draw[dspconn] (c5) -- (E);
    
\end{tikzpicture}
\caption{System Model.}
\label{fig1}
\end{figure}

We consider the scenario of Fig. \ref{fig1}, where an agent Alice transmits a secret message $\mathcal M$ to an intended destination agent, Bob, over the Alice-Bob channel.  A third agent, Eve, overhears the message transmitted by Alice over an independent Alice-Eve channel. 

Time is organized in consecutive {\em slots} of the same duration, which are grouped into {\em frames}, each comprising $\nslot$ consecutive slots. The transmission of $\mathcal M$ spans in general many slots. At slot $\islot$ of frame $\iframe$, Alice transmits message $X_{\iframe,\islot}^n$ of $n$ symbols containing an encoded version of the secret message. The secret message $\mathcal M$ bears $n R_s$ bits of information. At each slot, after Alice's transmission, Bob sends a \ac{NACK} message if he fails to correctly decode the secret message. When decoding is successful, Bob sends an \ac{ACK} feedback to Alice, who stops transmissions. After $\nslot\nframe$ slots used for the transmission of the same secret message, Alice discards the secret message, irrespective of Bob decoding outcome. ACK/NACK messages are perfectly received by Eve. The procedure is similar to \ac{HARQ}, except for the presence of Eve, thus we denote the resulting scheme as \ac{S-HARQ}. 

The message received by Bob at slot $\islot$ of frame $\iframe$ is denoted by $Y_{\iframe,\islot}^n$, which is related to $X_{\iframe,\islot}^n$ by a given transition probability distribution. Let us also define $X_\iframe^{nM} = [X_{\iframe,1}^n, \ldots, X_{\iframe,\nslot}^n]$ and $Y_\iframe^{nM} = [Y_{\iframe,1}^n, \ldots, Y_{\iframe,\nslot}^n]$. Eve overhears the packet $Z_{\iframe,\islot}^n$ at slot $\islot$ of frame $\iframe$, and let us define $Z_\iframe^{nM} = [Z_{\iframe,1}^n, \ldots, Z_{\iframe,\nslot}^n]$. Both Bob and Eve use all previously received packets for decoding purposes. Moreover, Eve knows the encoding procedure followed by Alice.

We assume a block-fading channel, i.e., the channels do not change within each slot, while they vary from slot to slot. We also assume that the number of channel uses $n$ within each slot is large enough so that we can use random coding arguments. For the Alice-Bob channel  $\{p(Y_{\iframe,\islot}|X_{\iframe,\islot})\}$ denotes the symbol transition probability for slot $\islot=1,2, \ldots, \nslot$, and frame $\iframe=1,2, \ldots, \nframe$. Similarly for the Alice-Eve channel we have $\{p(Z_{\iframe,\islot}|X_{\iframe,\islot})\}$. Let $p_{\islot, \iframe} = \{ p(Y_{\iframe,\islot}|X_{\iframe,\islot}), p(Z_{\iframe,\islot}|X_{\iframe,\islot})\}$ denote a generic channel realization for slot $\islot$ and frame $\iframe$, while $p_k = \{ p_{\islot, k},\; \islot=1,2, \ldots, \nslot\}$ is a set of channel realizations in frame $k$ and 
\begin{equation}
\bm{p} = \{ p_{\islot, \iframe},\; \islot=1,2, \ldots, \nslot, \iframe=1,2, \ldots, \nframe\}
\end{equation}
denotes a generic channel realization for the whole sequence of frames. Fading implies that $\bm{p}$ is a random vector, and block fading statistics dictates the statistics of the vector. We also assume that \ac{CSI} is not available to Alice before transmission.

\section{Code Construction}

Encoding aims at ensuring both decodability of $\mathcal M$ by Bob, and secrecy, i.e., preventing information leakage to Eve. To this end Alice encodes the secret message by using a random binning approach, independently (in the random message) for each frame. The {\em random message}  used at frame $\iframe$ to confuse Eve is denoted as $\mathcal M_{{\rm d}, \iframe}$ and bears $n R_{{\rm d}, \iframe}$ bits of information. The random message is independently generated at each frame, contains no secret information and may even be completely irrelevant for the three agents, since its purpose is only to confuse Eve about $\mathcal M$. We denote the collection of the random messages over the $\nframe$ frames as ${\mathcal M}_e = \{\mathcal M_{{\rm d}, 1},  \ldots, \mathcal M_{{\rm d}, \nframe}\}$.

\paragraph*{Encoding process} Let $i\in$ be the index of the (random) secret message $\mathcal M$ in the set of $2^{nR_s}$ messages. The encoding process works as follows: at frame $\iframe$, Alice selects an index $j_\iframe \in \{0, 1, \ldots, nR_{{\rm d},\iframe}\}$ randomly and uniformly, and generates the codeword $X^{nM}_{\iframe}(i,j_\iframe)$, which is punctured into $M$ codewords of $n$ symbols $X^n_{\iframe,1}(i,j_\iframe),\ldots,X^n_{\iframe,\nslot}(i,j_\iframe)$. Then, the punctured codeword $X_{\iframe,\islot}^n(i,j_\iframe)$ is transmitted at the $\islot$-th slot of frame $\iframe$. The set of transmitted messages over all $\nframe$ frames is identified by the $(\nframe+1)-$ple $(i,j_1, \ldots, j_\nframe)$. 

\paragraph*{Decoding strategy} We consider a joint typicality decoder for both Bob and Eve. For the generic slot $\islot$ of frame $\iframe$, let $\mathcal T_{\iframe,\islot, \epsilon}^{n}(XY)$ denote the set of all  $\epsilon$-jointly weakly typical sequences $(\{X_{1,1}^n(i,j_1), \ldots, X_{\iframe,\islot}^n(i,j_\iframe)\}, \{Y_{1,1}^n, \ldots, Y_{\iframe,\islot}^n\})$. Bob decides for message $(\hat{\imath}, \hat{\jmath}_1, \ldots, \hat{\jmath}_\iframe)$ if $\{X_{\cdot, \cdot}^n(\hat\imath,\hat\jmath_\iframe)\}$ is the only sequence taken from $C_n$ that is $\epsilon$-jointly typical with $\{Y_{\cdot, \cdot}^n\}$. Otherwise an error is output. %Similarly, Eve decides for $(\hat{i}, \hat{j}_1, \ldots, \hat{j}_\iframe)$ if $(\{X_{\cdot, \cdot}^n(i,j_\iframe)\}, \{Z_{\cdot, \cdot}^n\}) \in \mathcal T_{\iframe,\islot,\epsilon}^{n}(XZ)$. 

The code used at each frame is called frame code, while  {\em \ac{HARQ} code} denotes the sequence of the frame codes. In particular:
\paragraph*{frame code} this is a subset $C_{n,\iframe}$ of $2^{n(R_s + R_{{\rm d}, \iframe})}$ words of $n\nslot$ symbols, randomly chosen. In fact, for each index pair $(a,b)$, $a\in\{1,\ldots,2^{nR_s}\}, b\in\{1,\ldots,2^{nR_{{\rm d},k}}\}$ we choose the word $X_{\iframe}^{nM}(a,b)$ with independent symbols all drawn from a same distribution $p_{X_{\iframe}}(\cdot)$;
\paragraph*{\ac{HARQ} code} this is the set of $2^{n(R_s+\sum_{\iframe=1}^\nframe R_{{\rm d}, \iframe})}$ codewords obtained by concatenating the $\nframe$ words $X^{n\nframe\nslot}(a,b_1,\ldots,b_\nframe) = [X_{1}^{n\nslot}(a,b_1),\ldots,X_{\nframe}^{n\nslot}(a,b_\nframe)]$ and is denoted by $C_n$. 

We also denote the set of all possible codes that can be generated for frame $\iframe$ as the ensemble $\mathcal C_{n,\iframe} = \{C_{n,\iframe}\}$, and that of all possible \ac{HARQ} codes as $\mathcal C_n = \{C_{n}\}$. We assume that the actually selected code (as well as the ensemble) is known to both Bob and Eve.

Note that in the case of a single frame ($\nframe =1$) we obtain the scenario considered in \cite{Tang2009}, where a single codeword is split into $\nslot$ parts that are sequentially transmitted until Bob decodes. On the other hand, when $\nslot = 1$ we have that a new codeword, generated by an independent random message, is transmitted at each slot. Other cases ($\nframe > 1$ and $\nslot > 1$) correspond to intermediate situations where each random message spans multiple slots, and more than one random message may be used to confuse the eavesdropper about the same secret message, provided that Bob needs more retransmissions. In the following we detail the general case for any value of $\nframe$ and $\nslot$.

\section{Decodability and Secrecy Conditions}

The design of the \ac{S-HARQ} code aims at ensuring that  a) Bob is able to decode the secret message with vanishing probability ({\em decodability}), and b) Eve gets vanishing information rate on the secret message ({\em secrecy}). Now we show that asymptotically ($n \rightarrow \infty$) for a given set $\mathcal P$ of channels, and for a given set of rates $(R_s, R_{{\rm d}, 1}, \ldots, R_{{\rm d}, K})$, there exists a \ac{S-HARQ} code that provides both decodability and secrecy. Note that the code to be used is the same for all channels in the set $\mathcal P$. In the considered scenario, no \ac{CSI} is available to Alice, therefore if the channel is not in $\mathcal P$ we may have an outage event, i.e., either Bob may not decode $\mathcal M$ or Eve may get some information on $\mathcal M$. The outage probability $P_{\rm out}$ is the probability that an outage event occurs. From the definition of set $\mathcal P$ we have a bound on $P_{\rm out}$ as 
\begin{equation}
P_{\rm out} \leq {\rm P}[ \bm{p} \notin \mathcal P]\,,
\end{equation}
where ${\rm P}[\cdot]$ denotes the probability operator. The characterization of the set $\mathcal P$ can then guide the code design and its usage, since we can obtain an estimate of the outage probability by assessing the probability that the channel over which the code is actually used is outside $\mathcal P$. 

In order to characterize  $\mathcal P$ we first derive conditions on the realization $\bm{p}$ that ensure decodability by Bob on average over a set of codes, then we derive conditions on $\bm{p}$ that ensure secrecy with respect to Eve on average over a set of codes. Finally we characterize the set $\mathcal P$ over which a single code provides both secrecy and decodability. 

From now on, for the sake of compactness we denote by $I_{\iframe,\islot}^{\rm B}(p_{\iframe,\islot}) = \lim_{n\to\infty}\frac1n {\rm I}(X_{\iframe,\islot}^n;Y_{\iframe,\islot}^n|p_{\iframe,\islot})$ the single letter mutual information across the legitimate channel at slot $\islot$ of frame $\iframe$, and the analogous for the eavesdropper channel by $I_{\iframe,\islot}^{\rm E}(p_{\iframe,\islot}) = \lim_{n\to\infty}\frac1n I(X_{\iframe,\islot}^n;Z_{\iframe,\islot}^n|p_{\iframe,\islot})$.

\subsection{Decodable codes ensemble characterization}

As decodability is concerned, we have the following result:
\begin{lemma}
\label{l5}
Let $(\hat{\mathcal M}, \hat{\mathcal M}_{\rm e})$ be the message decoded by the $\epsilon$-joint typicality decoder over $\nframe$ frames, and let the error probability associated with a given HARQ code $C_n$ be ${\rm P}_e(C_n|\bm{p}) = {\rm P} [({\mathcal M}, {\mathcal M}_e) \neq (\hat{\mathcal M}, \hat{\mathcal M}_e)| C_n, \bm{p}]$ for a given channel realization $\bm{p}$. For all $\nframe'\leq \nframe$ and $\nslot' \leq \nslot$ that satisfy
\begin{equation}
\begin{split}
\sum_{\iframe=1}^{\nframe'}  R_{{\rm d}, \iframe} + R_s < \sum_{\iframe=1}^{\nframe'-1} \sum_{\islot=1}^\nslot \left[{\rm I}_{\iframe,\islot}^{\rm B}(p_{\iframe,\islot}) - \delta(\epsilon)\right] + \\ 
\sum_{\iframe=1}^{\nframe'} \sum_{\islot=1}^{\nslot'} \left[{\rm I}_{\iframe,\islot}^{\rm B}(p_{\iframe,\islot}) - \delta(\epsilon)\right]\,, 
\end{split}
\label{conR1}
\end{equation}
with $\delta(\epsilon) > 0$, then for each and $n$ there exists a $\delta'_\epsilon(n)$ such that $\delta'_\epsilon(n) \xrightarrow[n \rightarrow \infty]{} 0$ for each $\epsilon$, and 
\begin{equation}
{\rm E}_{C_n}[{\rm P}_e(C_n|\bm{p})] \leq \delta_\epsilon(n)\,.
\end{equation}
\end{lemma}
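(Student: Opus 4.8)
The plan is to establish the bound by a standard random coding argument over the \ac{HARQ} code ensemble $\mathcal{C}_n$, combined with the joint typicality decoder, so that the per-code error probability averaged over $\mathcal{C}_n$ is controlled through the union bound. I would fix the transmitted tuple $(i,j_1,\ldots,j_{\nframe})$ and, exploiting the symmetry of the random binning construction, assume without loss of generality that the all-zero message was sent. The averaged error probability then splits into two contributions: (i) the event that the transmitted codeword is \emph{not} $\epsilon$-jointly typical with the received sequence $\{Y^n_{\cdot,\cdot}\}$, and (ii) the event that some competing tuple $(\hat\imath,\hat\jmath_1,\ldots,\hat\jmath_{\nframe'})\neq(i,j_1,\ldots,j_{\nframe'})$ is $\epsilon$-jointly typical with the received sequence at the decoding instant. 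Contribution (i) vanishes as $n\to\infty$ by the asymptotic equipartition property and is absorbed into a term $\delta'_\epsilon(n)$.

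The heart of the argument is the analysis of contribution (ii), for which the incremental structure of the scheme must be handled carefully. I would classify the competing tuples by the set $S\subseteq\{1,\ldots,\nframe'\}$ of frames in which the decoded codeword differs from the transmitted one. Because each index pair $(a,b_\iframe)$ selects an \emph{independently} drawn frame word, a frame $\iframe\in S$ carries a word statistically independent of the received signal, whereas a frame $\iframe\notin S$ carries the transmitted word itself. The essential coupling is through the shared secret index: if $\hat\imath\neq i$ then necessarily $S=\{1,\ldots,\nframe'\}$ since every frame word is fresh, whereas if $\hat\imath=i$ only the frames with $\hat\jmath_\iframe\neq j_\iframe$ are fresh. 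Counting yields about $2^{n\sum_{\iframe\in S}R_{{\rm d},\iframe}}$ competitors when $\hat\imath=i$, plus an extra multiplicative factor $2^{nR_s}$ for the class $S=\{1,\ldots,\nframe'\}$, which also admits $\hat\imath\neq i$.

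Next I would bound the probability that a fixed competitor in class $S$ is $\epsilon$-jointly typical with the received sequence. Since only the fresh frames are independent of $\{Y^n_{\cdot,\cdot}\}$, and since at the decoding depth $(\nframe',\nslot')$ the receiver has observed all $\nslot$ slots of frames $1,\ldots,\nframe'-1$ and the first $\nslot'$ slots of frame $\nframe'$, the standard typicality estimate gives a bound of the form $2^{-n\sum_{(\iframe,\islot)\in\Phi(S)}[{\rm I}^{\rm B}_{\iframe,\islot}-\delta(\epsilon)]}$, where $\Phi(S)$ collects the received slots belonging to fresh frames, i.e. all $\nslot$ slots for each $\iframe\in S$ with $\iframe<\nframe'$, and the first $\nslot'$ slots for $\iframe=\nframe'$ when $\nframe'\in S$. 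Multiplying counts by typicality probabilities and summing over the finitely many classes $S$ gives
\begin{equation}
{\rm E}_{C_n}[{\rm P}_e(C_n|\bm{p})]\leq \delta'_{\epsilon}(n)+\sum_{S}2^{-n\,\Delta(S)},
\end{equation}
where each exponent $\Delta(S)$ is the gap between the accumulated fresh mutual information over $\Phi(S)$ and the matching rate $\sum_{\iframe\in S}R_{{\rm d},\iframe}$ (augmented by $R_s$ for the full class). The inequality \eqref{conR1}, imposed for every admissible $(\nframe',\nslot')$, is exactly the statement that the dominant exponent, attained by the all-fresh class at the decoding depth $(\nframe',\nslot')$, is strictly positive; the remaining exponents are then positive a fortiori, so the finite sum decays geometrically and is absorbed into $\delta_\epsilon(n)$.

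I expect the main obstacle to be the bookkeeping of error events induced by the nested, incremental decoding: unlike a one-shot wiretap code, a competitor may agree with the transmitted tuple on an arbitrary subset of frames, and the shared secret index $i$ forbids these subsets from being chosen independently per frame. Getting the packing count and the fresh-slot set $\Phi(S)$ exactly right, so that the worst-case class reproduces the right-hand side of \eqref{conR1} and so that the family of conditions ranging over all $(\nframe',\nslot')$ indeed dominates every class $S$, is the delicate point; the typicality estimates themselves are routine once the event decomposition is fixed.
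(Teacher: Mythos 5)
Your architecture reproduces the paper's appendix proof almost step for step: fix the transmitted tuple by symmetry, dispose of the atypicality event via the AEP, classify competitors by the set $\mathcal S$ of frames carrying a fresh codeword, exploit the independence of the per-frame codebooks to factor the joint-typicality probability across frames (the paper's equation for ${\rm P}[\mathcal E_{i,j_1,\ldots,j_\nframe}]$ as a product of ${\rm P}[\mathcal F_{i,j_\iframe,\iframe}]$), count roughly $2^{n\sum_{\iframe\in\mathcal S}R_{{\rm d},\iframe}}$ competitors per class with the extra factor $2^{nR_s}$ reserved for the all-fresh class (your observation that $\hat\imath\neq i$ forces every frame word to be fresh is exactly the paper's treatment of the $i\neq 1$ events), apply the union bound, and single-letterize via ${\rm I}(X_\iframe^{n\nslot};Y_\iframe^{n\nslot}|p_\iframe)=\sum_{\islot}{\rm I}(X_{\iframe,\islot}^n;Y_{\iframe,\islot}^n|p_{\iframe,\islot})$. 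Your explicit bookkeeping of the fresh-slot set $\Phi(S)$ at depth $(\nframe',\nslot')$ is in fact more careful than the paper, which proves only the case $\nframe'=\nframe$, $\nslot'=\nslot$ and calls the generalization straightforward.

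The one step that fails is your closing claim that once (\ref{conR1}) makes the all-fresh exponent positive, ``the remaining exponents are then positive a fortiori.'' This is backwards, and you yourself flagged it as the delicate point. Condition (\ref{conR1}) constrains only the \emph{aggregate} rate $R_s+\sum_{\iframe\le\nframe'}R_{{\rm d},\iframe}$; the exponent of a proper subclass $\mathcal S$ pits $\sum_{\iframe\in\mathcal S}R_{{\rm d},\iframe}$ against the mutual information accumulated on the fresh frames only, and these conditions are not implied. Concretely, take $\nframe'=2$, $\nslot=\nslot'=1$, $I^{\rm B}_{1,1}=0.1$, $I^{\rm B}_{2,1}=10$, $R_{{\rm d},1}=R_{{\rm d},2}=R_s=1$: then (\ref{conR1}) holds with room to spare, yet the singleton class $\mathcal S=\{1\}$ contributes on the order of $2^{nR_{{\rm d},1}}\,2^{-n(I^{\rm B}_{1,1}-\delta(\epsilon))}=2^{n(0.9+\delta(\epsilon))}$, which diverges --- and genuinely so, since the decoder cannot resolve $j_1$, so the averaged error probability on the full pair $({\mathcal M},{\mathcal M}_e)$ does not vanish under (\ref{conR1}) alone. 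The paper's proof asserts the opposite ordering: among the $\hat\imath=i$ classes the singletons $|\mathcal S|=1$ are the \emph{most} restrictive, yielding per-frame requirements $R_{{\rm d},\iframe}<\sum_{\islot}I^{\rm B}_{\iframe,\islot}-\delta(\epsilon)$, and it leaves those per-frame terms standing in its final union bound rather than claiming they follow from (\ref{conR1}); the tension is absorbed downstream by the $[\cdot]^+$ in (\ref{condR5}), which discards from Bob's decoding any frame whose accumulated information falls below $R_{{\rm d},\iframe}$. To repair your argument you must either adjoin the per-frame conditions to the hypothesis or restrict attention to frames satisfying them, as the paper effectively does.
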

\begin{proof}
See the Appendix.
\end{proof}

%\begin{lemma}
%Under the same assumptions as Lemma \ref{l5}, there exists a sequence of HARQ codes $C^\star_n$, $n\in \mathbb N$, such that for all channels in which (\ref{conR1}) holds, we have
%\begin{equation}
%{\rm P}_e(C_n) \leq \delta_\epsilon(n)\,.
%\end{equation}
%
%\end{lemma}
%Since the error probability is non-negative, when its expectation goes to zero means that there exists a sequence of codes having vanishing error probability as $n \rightarrow \infty$.

\subsection{Secrecy codes ensemble characterization}

To deal with secrecy, we first denote the {\em information leakage of the first $\nslot'$ slots of the $\iframe$-th frame} to Eve when Alice uses code $C_{n,\iframe}$ over channel realization $p_{\iframe}$ as 
\begin{equation}
{\rm L}(C_{n,\iframe}|p_{\iframe},\nslot') = {\rm I}({\mathcal M}; \{Z_{\iframe, 1}^n, \ldots, Z_{\iframe, \nslot'}^n\}| C_{n,\iframe}, p_{\iframe}).
\end{equation}
Similarly, the information leakage for the transmission up to slot $\nslot'$ of frame $\nframe'$ is defined as 
\begin{equation}
\begin{split}
{\rm L}(C_n|\bm{p},\nslot',\nframe') = \\
{\rm I}({\mathcal M}; Z^{n\nslot}_1,\ldots,Z^{n\nslot}_{\nframe'-1}, \{Z^n_{\nframe', 1}, \ldots, Z^n_{\nframe', \nslot'}\}| C_n, \bm{p}).
\end{split}
\end{equation}

Then we start with the following lemma that establishes a relation between the information leakage of each frame and that of the transmission up to frame $\nframe'$.
\begin{lemma}
\label{l3}
The information leakage over all frames up to slot $\nslot'$ of frame $\nframe'$ is not larger than the sum of information leakage for each frame, i.e.,
\begin{equation*}
{\rm L}(C_n|\bm{p},\nslot',\nframe') \leq \sum_{\iframe=1}^{\nframe'-1} {\rm L}(C_{n,\iframe}| p_{\iframe},\nslot) + {\rm L}(C_{n,\nframe'}| p_{\nframe'},\nslot')\,.
\end{equation*}
\end{lemma}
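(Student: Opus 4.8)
The plan is to express the joint leakage as a difference of entropies and then bound the two resulting terms by exploiting the per-frame structure of the code in opposite directions. I would first introduce the shorthand $\bar Z_\iframe = Z^{n\nslot}_\iframe$ for each completed frame $\iframe = 1,\ldots,\nframe'-1$ and $\bar Z_{\nframe'} = \{Z^n_{\nframe',1},\ldots,Z^n_{\nframe',\nslot'}\}$ for the truncated frame $\nframe'$, so that ${\rm L}(C_n|\bm p,\nslot',\nframe') = {\rm I}(\mathcal M;\bar Z_1,\ldots,\bar Z_{\nframe'}|C_n,\bm p)$. Writing this mutual information as
\begin{equation*}
{\rm I}(\mathcal M;\bar Z_1,\ldots,\bar Z_{\nframe'}|C_n,\bm p) = {\rm H}(\bar Z_1,\ldots,\bar Z_{\nframe'}|C_n,\bm p) - {\rm H}(\bar Z_1,\ldots,\bar Z_{\nframe'}|\mathcal M,C_n,\bm p)
\end{equation*}
separates the bound into an unconditioned entropy and one conditioned on the secret message.

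For the first term I would invoke subadditivity of entropy, ${\rm H}(\bar Z_1,\ldots,\bar Z_{\nframe'}|C_n,\bm p)\le\sum_{\iframe=1}^{\nframe'}{\rm H}(\bar Z_\iframe|C_n,\bm p)$, which holds with no assumption on the joint law. The decisive step is the conditioned term: I would argue that, once $\mathcal M$ (and $C_n,\bm p$) is fixed, the observations $\bar Z_1,\ldots,\bar Z_{\nframe'}$ are mutually independent, so that ${\rm H}(\bar Z_1,\ldots,\bar Z_{\nframe'}|\mathcal M,C_n,\bm p)=\sum_{\iframe=1}^{\nframe'}{\rm H}(\bar Z_\iframe|\mathcal M,C_n,\bm p)$ with equality rather than merely a bound.

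This conditional independence is the heart of the argument and the point I expect to require the most care. It follows from the encoding construction: in frame $\iframe$ the transmitted word is $X^{nM}_\iframe(i,j_\iframe)$, selected by the secret index $i=\mathcal M$ together with the frame-local random message $j_\iframe$; the indices $j_\iframe$ are drawn independently and uniformly across frames, and by the block-fading assumption the channel mapping $X^{nM}_\iframe$ to $\bar Z_\iframe$ is independent across frames. Hence, conditioned on $\mathcal M=i$, the only residual randomness in $\bar Z_\iframe$ is $j_\iframe$ together with the frame-$\iframe$ channel noise, and these are independent across $\iframe$; the conditional law therefore factorizes, giving the claimed additivity of the conditional entropy.

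Combining the two bounds yields ${\rm I}(\mathcal M;\bar Z_1,\ldots,\bar Z_{\nframe'}|C_n,\bm p)\le\sum_{\iframe=1}^{\nframe'}{\rm I}(\mathcal M;\bar Z_\iframe|C_n,\bm p)$. To close, I would note that the marginal law of the pair $(\mathcal M,\bar Z_\iframe)$ depends on the HARQ code only through its frame-$\iframe$ component $C_{n,\iframe}$ and on $\bm p$ only through $p_\iframe$, again by the per-frame independence of the construction, so that each summand equals ${\rm I}(\mathcal M;\bar Z_\iframe|C_{n,\iframe},p_\iframe)$. This is precisely ${\rm L}(C_{n,\iframe}|p_\iframe,\nslot)$ for $\iframe<\nframe'$ and ${\rm L}(C_{n,\nframe'}|p_{\nframe'},\nslot')$ for $\iframe=\nframe'$, which reproduces the stated inequality.
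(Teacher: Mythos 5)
Your proof is correct, and it reaches the bound by a genuinely different (though closely related) decomposition than the paper's. The paper applies the chain rule for mutual information, writing ${\rm L}(C_n|\bm{p},\nslot',\nframe')$ as $\sum_{\iframe}{\rm I}({\mathcal M};\bar Z_\iframe\,|\,\bar Z_1,\ldots,\bar Z_{\iframe-1},C_n,p_\iframe)$, and then bounds each conditional term by the per-frame leakage ${\rm L}(C_{n,\iframe}|p_\iframe,\cdot)$ via the asserted Markov chain $(\bar Z_1,\ldots,\bar Z_{\iframe-1},C_n)\rightarrow(C_{n,\iframe},{\mathcal M})\rightarrow\bar Z_\iframe$; you instead split the leakage as ${\rm H}(\bar Z_1,\ldots,\bar Z_{\nframe'}|C_n,\bm{p})-{\rm H}(\bar Z_1,\ldots,\bar Z_{\nframe'}|{\mathcal M},C_n,\bm{p})$, apply subadditivity to the first term and exact factorization to the second. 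Both are standard routes to the inequality ${\rm I}(X;Y_1,\ldots,Y_K)\le\sum_k{\rm I}(X;Y_k)$ under conditional independence of the $Y_k$ given $X$, and both hinge on the identical structural fact: independent random binning across frames plus independent block-fading channel noise make the frame outputs conditionally independent given the secret message. Your version has two merits: it localizes all of the slack in the subadditivity of the unconditioned entropy (the conditioned entropy is exactly additive), and it justifies the conditional independence explicitly from the encoding construction (given ${\mathcal M}=i$ and a fixed code, the residual randomness in frame $\iframe$ is the pair of frame-local index $j_\iframe$ and frame-$\iframe$ noise, independent across $\iframe$), where the paper asserts the Markov chain in one line; incidentally, your route also sidesteps the paper's intermediate bound ${\rm H}(Z_\iframe^{n\nslot}|Z_1^{n\nslot},\ldots,Z_{\iframe-1}^{n\nslot},C_n,p_\iframe)\le{\rm H}(Z_\iframe^{n\nslot}|{\mathcal M},C_{n,\iframe},p_\iframe)$, which as printed appears to carry a typographical extra conditioning on ${\mathcal M}$. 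Your closing observation that the law of $({\mathcal M},\bar Z_\iframe)$ for a fixed code depends on $C_n$ and $\bm{p}$ only through $(C_{n,\iframe},p_\iframe)$ is the needed (and correct) step to identify each summand with the per-frame leakage, and you handle general $(\nframe',\nslot')$ directly where the paper proves only $\nframe'=\nframe$, $\nslot'=\nslot$ and declares the generalization straightforward.
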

\begin{proof} For the sake of a simpler notation we provide the proof for $K' =K$ and $M'=M$, the generalization being straightforward. 

Since we use independent random binning in each transmission,  $(Z_1^{n\nslot}, \ldots, Z_{\iframe-1}^{n\nslot}, C_n) \rightarrow (C_{n,\iframe}, {\mathcal M}) \rightarrow {Z}^{n\nslot}_\iframe$ is a Markov chain. 

By the chain rule for mutual information \cite[eq. (2.62)]{Cover} we have 
\begin{equation}
\begin{split}
{\rm I}({\mathcal M}; Z^{n\nslot}_1,\ldots,{Z}_\nframe^{n\nslot} | C_n, \bm{p}) =  \\
\sum_{\iframe=1}^\nframe {\rm I}({\mathcal M}; {Z}_\iframe^{n\nslot} |Z^{n\nslot}_1,\ldots,{Z}_{\iframe-1}^{n\nslot},  C_n, p_{\iframe})
\end{split}
\end{equation}
Each term in the sum can be upper bounded as 
\begin{equation}
\begin{split}
{\rm I}({\mathcal M}; & {Z}_\iframe^{n\nslot} |Z^{n\nslot}_1,\ldots,{Z}_{\iframe-1}^{n\nslot},  C_n, p_{\iframe}) = \\
 = {} &  {\rm H}(Z_\iframe^{n\nslot} |Z^{n\nslot}_1,\ldots,{Z}_{\iframe-1}^{n\nslot},  C_n, p_{\iframe}) -  \\
 & - {\rm H}(Z_\iframe^{n\nslot} |{\mathcal M}, Z^{n\nslot}_1,\ldots,{Z}_{\iframe-1}^{n\nslot} , C_n, p_{\iframe}) \\
 \leq {} &  {\rm H}(Z_\iframe^{n\nslot} |{\mathcal M}, C_{n,\iframe}, p_{\iframe}) - \\
& - {\rm H}(Z_\iframe^{n\nslot} |{\mathcal M}, Z^{n\nslot}_1,\ldots,{Z}_{\iframe-1}^{n\nslot} , C_n, p_{\iframe}) \\
 = {} &   {\rm H}(Z_\iframe^{n\nslot} | C_{n,\iframe}, p_{\iframe}) - {\rm H}(Z_\iframe^{n\nslot} |{\mathcal M}, C_{n,\iframe}, p_{\iframe}) \\
 = {} & {\rm I}({\mathcal M}; Z_\iframe^{n\nslot} | C_{n,\iframe}, p_{\iframe}) = {\rm L}(C_{n,\iframe}| p_{\iframe}, \nslot)\,.
\end{split}
\end{equation}
\end{proof}

Then we derive a bound on the information leakage at the $\iframe$-th frame by the following lemma. 
\begin{lemma}
\label{l1}
For each channel realization $\bm{p}$ and ($\nframe'$, $\nslot'$) such that $\sum_{\islot=1}^{\nslot}{\rm I}(X_{\iframe,\islot}^n;Z_{\iframe,\islot}^n | p_{\iframe,\islot}) < R_{{\rm d}, \iframe}$, $k=1, \ldots, \nframe'-1$, $\sum_{\islot=1}^{\nslot'}{\rm I}(X_{\nframe',\islot}^n;Z_{\nframe',\islot}^n | p_{\nframe',\islot}) < R_{{\rm d}, \nframe'}$, and for each $n$ and $\epsilon$ we have a $\delta(\epsilon)$ and a $\delta_\epsilon(n)$ such that $\delta_\epsilon(n) \xrightarrow[n \rightarrow \infty]{} 0$ and 
\begin{equation*}
{\rm E}_{C_{n,\iframe}} \left[ \frac{1}{n} {\rm L}(C_{n,\iframe}| p_{\iframe}, \nslot) \right] \leq \delta(\epsilon) + \delta_\epsilon(n)\,, \quad \iframe=1,\ldots, \nframe'-1
\end{equation*}
\begin{equation*}
{\rm E}_{C_{n,\nframe'}} \left[ \frac{1}{n} {\rm L}(C_{n,\nframe'}| p_{\nframe'}, \nslot') \right] \leq \delta(\epsilon) + \delta_\epsilon(n)\,.
\end{equation*}
\end{lemma}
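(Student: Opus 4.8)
The two displayed inequalities are the same single-frame statement evaluated with a different number of retained slots, so the plan is to prove one generic per-frame bound and specialise. Concretely, it suffices to show that for a frame code $C_{n,\iframe}$ whose dummy rate $R_{{\rm d},\iframe}$ exceeds Eve's accumulated per-slot mutual information $\sum_{\islot=1}^{\nslot'} I_{\iframe,\islot}^{\rm E}(p_{\iframe,\islot})$ over the first $\nslot'$ slots, the ensemble-averaged normalised leakage $\frac1n {\rm E}_{C_{n,\iframe}}[{\rm L}(C_{n,\iframe}|p_\iframe,\nslot')]$ does not exceed $\delta(\epsilon)+\delta_\epsilon(n)$; the first display is then the case $\nslot'=\nslot$ (a fully used earlier frame) and the second the case of the partially used last frame $\nframe'$. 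Treating the frames in isolation is legitimate because the random binning is drawn independently per frame, which is exactly the structure under which Lemma~\ref{l3} reduced the overall leakage to a sum of per-frame leakages.

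Writing $\bm{Z}=(Z^n_{\iframe,1},\ldots,Z^n_{\iframe,\nslot'})$ and $\bm{X}=(X^n_{\iframe,1},\ldots,X^n_{\iframe,\nslot'})$ and leaving the conditioning on $(C_{n,\iframe},p_\iframe)$ implicit, I would begin from the chain-rule identity
\begin{equation*}
{\rm L}(C_{n,\iframe}|p_\iframe,\nslot') = {\rm I}(\mathcal M,\mathcal M_{{\rm d},\iframe};\bm{Z}) - {\rm I}(\mathcal M_{{\rm d},\iframe};\bm{Z}\,|\,\mathcal M),
\end{equation*}
and show that the two terms cancel up to vanishing slacks. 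For the first term, $\bm X$ is a deterministic function of $(\mathcal M,\mathcal M_{{\rm d},\iframe})$, and since each slot output depends only on the corresponding slot input the blocks are conditionally independent given $\bm X$; the data-processing inequality together with sub-additivity of entropy then gives ${\rm I}(\mathcal M,\mathcal M_{{\rm d},\iframe};\bm Z)\le\sum_{\islot=1}^{\nslot'}{\rm I}(X^n_{\iframe,\islot};Z^n_{\iframe,\islot})$, which in expectation over the ensemble equals $n\sum_{\islot=1}^{\nslot'}I_{\iframe,\islot}^{\rm E}(p_{\iframe,\islot})+n\delta(\epsilon)$ --- the same accumulated quantity that the hypothesis forces $R_{{\rm d},\iframe}$ to exceed.

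The crux is to lower bound the second term by the same accumulated mutual information so that it cancels the first. Conditioned on a fixed secret message $\mathcal M=i$, the sub-codebook is a set of $2^{nR_{{\rm d},\iframe}}$ codewords drawn i.i.d.\ from $p_{X_\iframe}$, and the uniform dummy index $\mathcal M_{{\rm d},\iframe}$ induces an output law $Q_i$ on Eve's product channel $\prod_\islot p(z_{\iframe,\islot}|x_{\iframe,\islot})$. Because $R_{{\rm d},\iframe}>\sum_{\islot=1}^{\nslot'}I_{\iframe,\islot}^{\rm E}$, a soft-covering (channel-resolvability) argument shows that, on average over the ensemble, $Q_i$ is $\epsilon$-close to the product of the per-slot typical output measures uniformly in $i$, whence ${\rm H}(\bm Z\,|\,\mathcal M)\ge\sum_{\islot}{\rm H}(Z^n_{\iframe,\islot})-n\delta(\epsilon)-n\delta_\epsilon(n)$. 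Subtracting the exact identity ${\rm H}(\bm Z\,|\,\mathcal M,\mathcal M_{{\rm d},\iframe})={\rm H}(\bm Z\,|\,\bm X)=\sum_\islot {\rm H}(Z^n_{\iframe,\islot}\,|\,X^n_{\iframe,\islot})$ yields ${\rm I}(\mathcal M_{{\rm d},\iframe};\bm Z\,|\,\mathcal M)\ge\sum_\islot{\rm I}(X^n_{\iframe,\islot};Z^n_{\iframe,\islot})-n\delta(\epsilon)-n\delta_\epsilon(n)$. Combining this with the first-term bound, the $\sum_\islot{\rm I}(X^n_{\iframe,\islot};Z^n_{\iframe,\islot})$ contributions cancel and $\frac1n{\rm E}_{C_{n,\iframe}}[{\rm L}]\le\delta(\epsilon)+\delta_\epsilon(n)$, which is the claim.

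I expect the soft-covering step to be the main obstacle. The subtlety is that the dummy index must simultaneously saturate Eve's channel across all $\nslot'$ correlated slots of the frame, so the resolvability requirement falls on the sum $\sum_\islot I_{\iframe,\islot}^{\rm E}$ rather than on any individual slot; this is precisely why the hypothesis is phrased as an accumulated constraint on $R_{{\rm d},\iframe}$. I would make it rigorous either by invoking a product-channel soft-covering lemma with input $\prod_\islot p_{X_\iframe}$, or, in the purely typicality-based style of the rest of the paper, by a counting argument showing that with high probability over $C_{n,\iframe}$ the number of dummy codewords jointly typical with a given $\bm z$ concentrates around $2^{n(R_{{\rm d},\iframe}-\sum_\islot I_{\iframe,\islot}^{\rm E})}\gg1$, so that Eve's posterior on $\mathcal M$ stays essentially uniform. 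In either route it is the ensemble average ${\rm E}_{C_{n,\iframe}}[\cdot]$ that converts these high-probability, code-dependent estimates into the stated deterministic bound.
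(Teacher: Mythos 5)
Your proposal is correct in outline, but it takes a genuinely different route from the paper, whose entire proof of Lemma \ref{l1} is a citation: ``see the wiretap coding theorem \cite[pg.~72]{Bloch}.'' In other words, the paper treats each frame code as a standard Wyner wiretap code for the $\nslot$-slot (or $\nslot'$-slot) memoryless product channel and imports the leakage bound wholesale; the cited textbook proof is the classical Fano-based equivocation computation, in which the randomization rate is taken just \emph{below} Eve's mutual information so that Eve, given $\mathcal M$, can decode the dummy index, forcing ${\rm I}(\mathcal M_{{\rm d},\iframe};\bm Z\,|\,\mathcal M)\approx nR_{{\rm d},\iframe}$. You instead prove the per-frame bound from first principles via channel resolvability/soft covering, lower-bounding ${\rm H}(\bm Z\,|\,\mathcal M)$ by showing each bin's output law approximates the product output distribution. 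This is a legitimate and well-matched choice: the lemma's hypothesis is $\sum_{\islot} I^{\rm E}_{\iframe,\islot} < R_{{\rm d},\iframe}$, i.e.\ the dummy rate strictly \emph{exceeds} Eve's accumulated information, which is exactly the regime where the resolvability argument applies directly, whereas the textbook Fano argument needs a sub-binning adjustment (splitting $\mathcal M_{{\rm d},\iframe}$ into a decodable part of rate just under $\sum_\islot I^{\rm E}_{\iframe,\islot}$ and a residual) to handle a strictly larger $R_{{\rm d},\iframe}$ --- a step the paper's citation quietly elides. Your reduction of both displayed inequalities to one generic statement with $\nslot'$ retained slots, and your justification for treating frames in isolation via the independent per-frame binning, correctly mirror how the paper deploys this lemma together with Lemma \ref{l3}. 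Two caveats: your soft-covering step is asserted rather than proved (you acknowledge this), and converting the total-variation closeness it delivers into the entropy bound ${\rm H}(\bm Z\,|\,\mathcal M)\ge\sum_\islot {\rm H}(Z^n_{\iframe,\islot})-n\delta(\epsilon)-n\delta_\epsilon(n)$ requires an explicit continuity-of-entropy argument over $n\nslot'$ letters, which should be stated if you want the proof self-contained; with that lemma imported, your argument buys a self-contained and in fact stronger (resolvability-based) justification than the paper's one-line citation, at the cost of machinery the paper never develops.
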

\begin{proof}
See the wiretap coding theorem \cite[pg. 72]{Bloch}.
\end{proof}

By combining the two results we obtain the following lemma.

\begin{lemma}
\label{l4}
For each channel realization $\bm{p}$ and ($\nframe'$, $\nslot'$) such that 
\begin{equation}
\begin{split}
\sum_{\islot=1}^\nslot {\rm I}^{\rm E}_{\iframe,\islot}(p_{\iframe,\islot}) < R_{{\rm d}, \iframe} \quad \forall \iframe =1, 2, \ldots, \nframe'-1 \\
\sum_{\islot=1}^{\nslot'} {\rm I}^{\rm E}_{\nframe,\islot}(p_{\nframe',\islot}) < R_{{\rm d}, \nframe'}
\end{split}
\label{conR2}
\end{equation}
and for each $n$, we have a $\delta(\epsilon)$ and a $\delta_\epsilon(n)$ such that $\delta_\epsilon(n) \xrightarrow[n \rightarrow \infty]{} 0$ and 
\begin{equation}
{\rm E}_{C_n} \left[\frac{1}{n} {\rm L}(C_n|\bm{p}, \nslot', \nframe') \right] \leq \nframe' \delta(\epsilon) + \nframe' \delta_\epsilon(n)\,.
\end{equation}
\end{lemma}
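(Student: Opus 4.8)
The plan is to combine the two preceding lemmas directly, since Lemma~\ref{l4} is precisely the aggregate statement obtained from Lemma~\ref{l3} (which bounds the total leakage by a sum of per-frame leakages) and Lemma~\ref{l1} (which bounds each per-frame leakage in expectation). First I would take the expectation over the full \ac{HARQ} code ensemble $\mathcal C_n$ of the inequality furnished by Lemma~\ref{l3}, namely
\begin{equation*}
{\rm L}(C_n|\bm{p},\nslot',\nframe') \leq \sum_{\iframe=1}^{\nframe'-1} {\rm L}(C_{n,\iframe}| p_{\iframe},\nslot) + {\rm L}(C_{n,\nframe'}| p_{\nframe'},\nslot')\,.
\end{equation*}
Dividing by $n$ and applying linearity and monotonicity of expectation gives
\begin{equation*}
{\rm E}_{C_n}\!\left[\tfrac1n{\rm L}(C_n|\bm{p},\nslot',\nframe')\right] \leq \sum_{\iframe=1}^{\nframe'-1} {\rm E}_{C_n}\!\left[\tfrac1n{\rm L}(C_{n,\iframe}| p_{\iframe},\nslot)\right] + {\rm E}_{C_n}\!\left[\tfrac1n{\rm L}(C_{n,\nframe'}| p_{\nframe'},\nslot')\right]\,.
\end{equation*}

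The next step is to reduce each expectation over the full ensemble $\mathcal C_n$ to the corresponding expectation over the single-frame ensemble $\mathcal C_{n,\iframe}$, so that Lemma~\ref{l1} applies. Here I would invoke the construction of the \ac{HARQ} code as a concatenation of independently-drawn frame codes: since ${\rm L}(C_{n,\iframe}|p_{\iframe},\cdot)$ depends on $C_n$ only through its $\iframe$-th frame component $C_{n,\iframe}$, and the frame codes are drawn independently, marginalizing over the remaining frames yields ${\rm E}_{C_n}[\,\cdot\,] = {\rm E}_{C_{n,\iframe}}[\,\cdot\,]$ for each of these terms. I would next check that the hypotheses match: the channel conditions \eqref{conR2} written in terms of $I^{\rm E}_{\iframe,\islot}$ are exactly the asymptotic ($n\to\infty$) single-letter form of the mutual-information conditions $\sum_\islot {\rm I}(X^n_{\iframe,\islot};Z^n_{\iframe,\islot}|p_{\iframe,\islot}) < R_{{\rm d},\iframe}$ required by Lemma~\ref{l1}, so each per-frame hypothesis is satisfied.

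With the reduction in place, Lemma~\ref{l1} supplies a common $\delta(\epsilon)$ and $\delta_\epsilon(n)$ with $\delta_\epsilon(n)\to 0$ such that each of the $\nframe'-1$ full-frame terms and the single partial-frame term is bounded by $\delta(\epsilon)+\delta_\epsilon(n)$. Summing these $\nframe'$ bounds gives
\begin{equation*}
{\rm E}_{C_n}\!\left[\tfrac1n{\rm L}(C_n|\bm{p},\nslot',\nframe')\right] \leq \nframe'\bigl(\delta(\epsilon)+\delta_\epsilon(n)\bigr) = \nframe'\delta(\epsilon) + \nframe'\delta_\epsilon(n)\,,
\end{equation*}
which is the claimed inequality, and $\nframe'\delta_\epsilon(n)\to 0$ since $\nframe'$ is fixed. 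The main obstacle I anticipate is the ensemble-marginalization step: one must argue carefully that averaging the per-frame leakage ${\rm L}(C_{n,\iframe}|p_{\iframe},\cdot)$ over the joint ensemble $\mathcal C_n$ coincides with averaging over $\mathcal C_{n,\iframe}$ alone, which relies on the independence of the frame codes in the concatenated construction and on the fact that the leakage of frame $\iframe$ is measurable with respect to $C_{n,\iframe}$ only. A secondary subtlety is the interchange of the $n\to\infty$ limit implicit in passing from the mutual-information conditions of Lemma~\ref{l1} to the single-letter conditions \eqref{conR2}; this is the reason the bound is stated for each finite $n$ with an explicit vanishing $\delta_\epsilon(n)$ rather than as a clean limit.
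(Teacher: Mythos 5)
Your proposal is correct and takes exactly the same route as the paper, whose entire proof of Lemma~\ref{l4} is the one-line remark that it ``follows from Lemmas~\ref{l1} and~\ref{l3}''; you have simply made explicit the steps the paper leaves implicit (taking expectations over $\mathcal C_n$ in Lemma~\ref{l3}'s bound, marginalizing each per-frame term to ${\rm E}_{C_{n,\iframe}}$ via the independence of the frame codes, and summing the $\nframe'$ bounds from Lemma~\ref{l1}). Your attention to the ensemble-marginalization step is a worthwhile addition rather than a deviation.
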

\begin{proof}
Follows from Lemmas \ref{l1} and \ref{l3}.
\end{proof}

\subsection{Characterization of the set $\mathcal P$}

Having derived sufficient conditions for decodability and secrecy for given channel realizations we are now ready to derive conditions for both decodability and secrecy with the same code. We now show that for the set 
\begin{equation}
\mathcal P =\{\bm{p}: \exists (\nframe', \nslot') \mbox{ for which both (\ref{conR1}) and (\ref{conR2}) hold}\}
\end{equation} 
there exists a single code (sequence) that provides both secrecy and decodability.

\begin{theorem}
\label{l6}
For all $n$ there exists a specific code $C^*_n$ with rates $R_s$ and $\{R_{{\rm d}, \iframe}\}$  such that, for all channels $\bm{p} \in \mathcal P$ there exists $\nframe'(\bm{p})$ and $\nslot'(\bm{p})$ such that 
\begin{equation}
\label{eq12}
{\rm P}_e(C_n^*|\bm{p}) \leq \delta_\epsilon(n)\,, \quad {\rm L}(C_n^*|\bm{p}, \nslot'(\bm{p}), \nframe'(\bm{p}))  \leq \delta(\epsilon) + \nframe \delta_\epsilon(n)
\end{equation}
and 
\begin{equation}
\lim_{n\rightarrow \infty} {\rm P}_e(C^*_n|\bm{p}) = 0\,,  \quad \lim_{n\rightarrow \infty} \frac{1}{n} {\rm L}(C^*_n|\bm{p}) \leq \delta(\epsilon)\,.
\label{sec_cond}
\end{equation}
\end{theorem}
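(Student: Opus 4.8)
The plan is to promote the two \emph{ensemble-average} guarantees — decodability from Lemma~\ref{l5} and secrecy from Lemma~\ref{l4} — into a single deterministic code that is simultaneously good for every channel in $\mathcal P$, by combining the probabilistic method with a compound-channel covering argument. First I would fix a target $\epsilon$ and, for each $\bm p \in \mathcal P$, select one pair $(\nframe'(\bm p), \nslot'(\bm p))$ for which both (\ref{conR1}) and (\ref{conR2}) hold; such a pair exists by the very definition of $\mathcal P$. For this pair, Lemma~\ref{l5} gives ${\rm E}_{C_n}[{\rm P}_e(C_n|\bm p)] \le \delta_\epsilon(n)$ and Lemma~\ref{l4} gives ${\rm E}_{C_n}[\frac1n {\rm L}(C_n|\bm p,\nslot'(\bm p),\nframe'(\bm p))] \le \nframe'\delta(\epsilon) + \nframe'\delta_\epsilon(n)$. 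The essential structural point is that these bounds are \emph{uniform} over all $\bm p$ satisfying the respective conditions: neither $\delta(\epsilon)$ nor $\delta_\epsilon(n)$ depends on the particular realization, only on whether the defining inequalities hold.

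Next I would apply Markov's inequality to each averaged bound, so that the fraction of codes in the ensemble for which the error exceeds a fixed multiple of $\delta_\epsilon(n)$, or the normalized leakage exceeds a fixed multiple of $\nframe(\delta(\epsilon)+\delta_\epsilon(n))$, is strictly below a chosen small constant. For a single, fixed $\bm p$ this already yields a code good for that $\bm p$ with probability close to one; the entire difficulty is to extract \emph{one} code that is good for \emph{all} $\bm p \in \mathcal P$ at once.

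The hard part will be precisely this passage from a per-channel statement to a statement uniform over the (in general uncountable) family $\mathcal P$, since a naive union bound over $\mathcal P$ is unavailable. I would resolve it by a standard compound-channel covering. There are only finitely many admissible index pairs, $\nframe'\le \nframe$ and $\nslot'\le \nslot$, so $\mathcal P$ splits into finitely many regions $\mathcal P_{\nframe',\nslot'}$, each cut out by the \emph{strict} inequalities (\ref{conR1})--(\ref{conR2}) in the parameters $\{{\rm I}^{\rm B}_{\iframe,\islot}, {\rm I}^{\rm E}_{\iframe,\islot}\}$, which range over a compact set. For fixed $n$ and finite alphabets, ${\rm P}_e(C_n|\bm p)$ and $\frac1n {\rm L}(C_n|\bm p,\cdot,\cdot)$ are continuous in these parameters, so the strict margins let me cover each $\mathcal P_{\nframe',\nslot'}$ by finitely many balls and pick one representative channel per ball. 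A union bound over the finitely many representatives and the finitely many pairs, with the Markov thresholds rescaled by this finite count, keeps the total failure probability strictly below $1$; hence there exists a code $C^*_n$ that is simultaneously good at all representatives, and continuity then propagates goodness from each representative to its whole ball, hence to all of $\mathcal P$. The finite counting constants can be absorbed into a redefined $\delta(\epsilon)$, which still vanishes as $\epsilon\to0$.

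Finally I would read off the conclusion. For the selected $C^*_n$ and any $\bm p\in\mathcal P$, evaluating at its pair $(\nframe'(\bm p),\nslot'(\bm p))$ yields both inequalities of (\ref{eq12}). Letting $n\to\infty$ drives $\delta_\epsilon(n)\to0$, so ${\rm P}_e(C^*_n|\bm p)\to0$; and since $\epsilon$, and therefore $\delta(\epsilon)$, is arbitrary, the normalized leakage is driven to be bounded by $\delta(\epsilon)$, which gives exactly (\ref{sec_cond}).
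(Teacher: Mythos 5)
Your overall plan---derandomize the two ensemble-average guarantees of Lemmas~\ref{l5} and~\ref{l4} into a single code---is the right goal, but the covering step on which your whole uniformity argument rests has a genuine flaw. You propose to cover each region $\mathcal P_{\nframe',\nslot'}$ by finitely many balls \emph{in the space of mutual-information parameters} $\{I^{\rm B}_{\iframe,\islot},I^{\rm E}_{\iframe,\islot}\}$ and to propagate goodness by continuity of ${\rm P}_e(C_n|\bm{p})$ and $\frac{1}{n}{\rm L}(C_n|\bm{p},\cdot,\cdot)$ ``in these parameters.'' But for a \emph{fixed} code these quantities are functionals of the actual transition laws $\bm{p}$, not of the mutual-information vector: two channels with identical $\{I^{\rm B},I^{\rm E}\}$ values can yield very different error probabilities for the same specific codebook (only the \emph{ensemble-averaged} bounds of the lemmas depend on $\bm{p}$ solely through the inequalities (\ref{conR1})--(\ref{conR2}), which is the uniformity you correctly note---but that uniformity does not transfer to per-code quantities). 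If you repair this by covering the space of channel kernels instead (compact for finite alphabets), a second problem appears: the modulus of continuity of the $n$-letter error probability and leakage in the single-letter kernel degrades linearly in $n$ (a kernel perturbation of size $\eta$ moves the $n$-letter output law by roughly $n\eta$ in total variation), so the covering number $N$ must grow with $n$. Your Markov thresholds then become of order $N(n)\,\delta_\epsilon(n)$, and the closing remark that ``the finite counting constants can be absorbed into a redefined $\delta(\epsilon)$'' silently assumes $N$ independent of $n$; with weak-typicality bounds, $\delta_\epsilon(n)$ may decay only polynomially, so $N(n)\,\delta_\epsilon(n)\to 0$ is not guaranteed. (A smaller issue: the regions cut out by the \emph{strict} inequalities are relatively open, so the margin is not uniform near their boundaries.)

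The paper avoids the uniformity-over-uncountable-$\mathcal P$ difficulty entirely, by a different and much shorter route: it takes the expectation over $\bm{p}\in\mathcal P$ with respect to the fading distribution of the channel-conditional bounds from Lemmas~\ref{l5} and~\ref{l4}, swaps ${\rm E}_{\bm{p}\in\mathcal P}$ and ${\rm E}_{C_n}$ (legitimate since the integrands are non-negative and finite), and then applies the selection lemma \cite[pg.~14]{Bloch} to the single random variable $C_n$ acting on both functions ${\rm P}_e(\cdot)$ and ${\rm L}(\cdot)$; non-negativity of ${\rm P}_e$ and ${\rm L}$ is then invoked to pass from the channel-averaged guarantees back to (\ref{eq12}) and (\ref{sec_cond}). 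This buys simplicity---no quantization, no union bound, no continuity claims---at the price that what is directly controlled is the \emph{average} over $\bm{p}\in\mathcal P$, with the per-channel reading resting on non-negativity of the integrands (i.e., a guarantee of Markov type over the channel set rather than the worst-case compound guarantee your outline targets). If you do want the genuinely uniform compound-channel statement, you would need the heavier expurgation and concentration machinery developed for compound wiretap channels; plain Markov plus a covering argument, as sketched, does not close.
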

\begin{proof}
From the definition of $\mathcal P$ and lemma \ref{l1} we immediately have
\begin{equation}
{\rm E}_{\bm{p} \in \mathcal P}[ {\rm E}_{C_n}[{\rm P}_e(C_n|\bm{p})]] \leq \delta_\epsilon(n)\,,
\end{equation}
while from Lemma \ref{l4} (and by the fact that $\nframe'(\bm{p}) < \nframe$) we also have
\begin{equation}
{\rm E}_{\bm{p} \in \mathcal P}\left[ {\rm E}_{C_n} \left[\frac{1}{n} {\rm L}(C_n|\bm{p}, \nslot'(\bm{p}), \nframe'(\bm{p})) \right]\right] \leq \nframe \delta(\epsilon) + \nframe \delta_\epsilon(n)\,.
\end{equation}
Then, similarly to the approach of \cite{Tang2009} we can swap the expectations over the channel set and the codes (since the integrands are non negative and finite), obtaining 
\begin{equation}
{\rm E}_{C_n}[{\rm E}_{\bm{p} \in \mathcal P}[ {\rm P}_e(C_n|\bm{p})]] \leq \delta_\epsilon(n)
\end{equation}
\begin{equation}
 {\rm E}_{C_n} \left[{\rm E}_{\bm{p} \in \mathcal P}\left[\frac{1}{n} {\rm L}(C_n|\bm{p}, \nslot'(\bm{p}), \nframe'(\bm{p})) \right]\right] \leq \nframe \delta(\epsilon) + \nframe \delta_\epsilon(n)\,.
\end{equation}
Now by applying the selection lemma \cite[pg. 14]{Bloch} to both functions ${\rm P}_e(\cdot)$ and ${\rm L}(\cdot)$, with reference to the random variable $C_n$, we obtain a sequence of codes with vanishing error probability and leakage. By observing that both ${\rm P}_e(\cdot)$ and ${\rm L}(\cdot)$ are non negative, we obtain (\ref{eq12}) and (\ref{sec_cond}).
\end{proof}

 We then have {\em a single code sequence} that provides both decodability and secrecy for {\em all channels in the set $\mathcal P$}.
 
\paragraph*{Remark 1} this result generalizes that of \cite{Tang2009}: for that code construction in fact, sufficient conditions for secrecy were ensured by a constraint on the sum of the mutual information between Alice and Eve across slots of a single frame. In our scenario instead we need bounds on each frame separately, as indicated by (\ref{conR2}).

\section{Secure channel sets}

Since the set $\mathcal P$ is defined in terms of the mutual information of the Alice-Bob and Alice-Eve channels, we can equivalently describe it by the set of mutual informations satisfying the constraints, i.e., by the set 
\begin{equation}
\begin{split}
&\mathcal Q = \{\{I_{\iframe,\islot}^{\rm B},I_{\iframe,\islot}^{\rm E}\}: \\
&\ \bm{p} \in \mathcal P, \mbox{ and } I_{\iframe,\islot}^{\rm B} = I_{\iframe,\islot}^{\rm B}(p_{\iframe,\islot})\,, I_{\iframe,\islot}^{\rm E} = I_{\iframe,\islot}^{\rm E}(p_{\iframe,\islot})\}\,.
\end{split}
\end{equation}
From the results of the previous Section we have 
\begin{equation}
{\mathcal Q}  = \bigcup_{\nframe'=1}^{\nframe}\bigcup_{\nslot'=1}^{\nslot} \left[ {\mathcal Q}^{\rm (E)}_{\rm S}(\nframe',\nslot') \cap {\mathcal Q}^{\rm (B)}_{\rm S}(\nframe',\nslot') \right],
\end{equation}
where ${\mathcal Q}^{\rm (E)}_{\rm S}(\nframe',\nslot')$ indicates the set of channels for which no information about the secret message has leaked to Eve up to the $\islot$-th slot of the $\iframe$-th frame, 
\begin{IEEEeqnarray*}{l}
{\mathcal Q}^{\rm (E)}_{\rm S}(\nframe',\nslot') = \left\{\{I_{\iframe,\islot}^{\rm B},I_{\iframe,\islot}^{\rm E}\} :\right. \\
\quad \sum_{\islot'=1}^{\nslot} I_{\iframe',\islot'}^{\rm E} \leq R_{{\rm d}, \iframe'}\,, \mbox{ for } \iframe' = 1, 2, \ldots, \nframe'-1 \,,\\
\left.\quad\sum_{\islot'=1}^{\nslot'} I_{\nframe',\islot'}^{\rm E} \leq R_{{\rm d}, \nframe'}\right\}\IEEEyesnumber
\label{sc}
\end{IEEEeqnarray*}
while ${\mathcal Q}^{\rm (B)}_{\rm S}(\nframe',\nslot')$ indicates the set of channels for which the secret message is decodable by Bob within the $\islot$-th slot of the $\iframe$-th frame, 
\begin{IEEEeqnarray}{l}
{\mathcal Q}^{\rm (B)}_{\rm S}(\nframe',\nslot') = \left\{\{I_{\iframe,\islot}^{\rm B},I_{\iframe,\islot}^{\rm E}\} : \right. \IEEEnonumber\\
\quad\sum_{\iframe'=1}^{\nframe'-1} \left[\sum_{\islot'=1}^{\nslot} I_{\iframe',\islot'}^{\rm B} - R_{{\rm d}, \iframe'}\right]^+ \IEEEnonumber\\
\quad {+}\: \left.\left[\sum_{\islot'=1}^{\nslot'} I_{\nframe',\islot'}^{\rm B} - R_{{\rm d}, \nframe'}\right]^+ \geq R_s \right\}
\label{condR5}
\end{IEEEeqnarray}
where $[x]^+ = 0$ if $x < 0$ and $[x]^+ = x$ if $x > 0$. Condition (\ref{condR5}) follows by applying Lemma \ref{l1} to the Alice-Bob channel, as we observe that if $\sum_{\islot=1}^{\nslot} I_{\iframe,\islot}^{\rm B} \leq R_{{\rm d}, \iframe} $, Bob will not make use of the signal received in frame $\iframe$ to decode the secret message.

\subsection{Outage Analysis}
\label{outage}

%In the previous section we have derived conditions on the channel that ensures secrecy and decodability of the message. Now we derive the probability that those conditions are not met over a fading channel, thus yielding a transmission outage.

A bound on the reliability outage probability for the whole transmission is then
\begin{equation}
P_{\rm  o} \leq {\rm P}\left[\{I_{\iframe,\islot}^{\rm B},I_{\iframe,\islot}^{\rm E}\}  \not\in\mathcal Q^{\rm (B)}(\nframe,\nslot)\right]\,,
\label{Po}
\end{equation}
while the probability that decoding happens exactly at the $\islot$-the slot of the $\iframe$-th frame is bounded as
\begin{equation}
\begin{split}
P_{\rm  D}(\iframe,\islot) \geq \\
\begin{cases}
{\rm P}\left[\{I_{\iframe,\islot}^{\rm B},I_{\iframe,\islot}^{\rm E}\}  \in\mathcal Q^{\rm (B)}(\iframe,\islot)\setminus\mathcal  Q^{\rm (B)}(\iframe,\islot-1)\right] & \islot > 1 \\
{\rm P}\left[\{I_{\iframe,\islot}^{\rm B},I_{\iframe,\islot}^{\rm E}\}  \in\mathcal Q^{\rm (B)}(\iframe,1)\setminus\mathcal  Q^{\rm (B)}(\iframe-1,\nslot)\right] & \islot = 1 
\end{cases}\,.
\end{split}
\label{PDkm}
\end{equation}
%Let $\nframe'$ be the effective number of retransmissions made by Alice. 
Assuming that the Alice-Bob and Alice-Eve channels are independent, the secrecy outage probability up to slot $\islot$ of frame $\iframe$ is bounded by
\begin{equation}
P_s(\iframe,\islot) \leq {\rm P}\left[\{I_{\iframe,\islot}^{\rm B},I_{\iframe,\islot}^{\rm E}\}  \not\in\mathcal Q^{\rm (E)}(\iframe,\islot)\right]\,.
\label{Ps}
\end{equation}

\section{Numerical Results}

We first provide some insight into the performance of the proposed solution by considering a transmission with only two frames $\nframe=2$ and one slot per frame $\nslot=1$. For given values of $R_{{\rm d}, 1}$, $R_{{\rm d}, 2}$ and $R_s$, Fig. \ref{fig2reg} shows as a dashed area the set $\mathcal Q_S^{\rm (B)}(2,1)$ with $\nframe=2$ and $\nslot=1$. We also show in gray the set $\mathcal Q_S^{\rm (B)}(1,2)$  with $\nframe=1$ frame and  $\nslot=2$ slot, that is when the \ac{IR-HARQ} scheme  is used with a random rate $R^{IR}_{\rm d}$ over the same channel. We observe that the shape of the two areas are different. Similarly, for given values of $R_{{\rm d}, 1}$, $R_{{\rm d}, 2}$ and $R_s$, Fig. \ref{fig3reg} shows as a dashed area the set $\mathcal Q_S^{\rm (E)}(2,1)$ with $\nframe=2$ and $\nslot=1$. We also show in gray the performance of \ac{IR-HARQ}. Also in this case we observe that the shape of the two areas are different. We conclude that \ac{S-HARQ} is a non-trivial extension of \ac{IR-HARQ}.

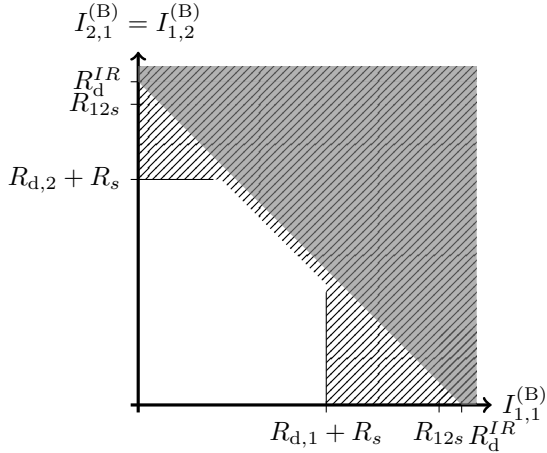
\begin{figure}
\centering
\begin{tikzpicture}

    %\draw[gray!50, thin, step=0.5] (-1,-3) grid (5,4);
    \draw[very thick,->] (-.1,0) -- (4.7,0) node[right] {$I_{1,1}^{\rm (B)}$};
    \draw[very thick,->] (0,-.1) -- (0,4.7) node[above] {$I_{2,1}^{\rm (B)} = I_{1,2}^{\rm (B)}$};

    %\foreach \x in {-1,...,5} \draw (\x,0.05) -- (\x,-0.05) node[below] {\tiny\x};
    %\foreach \y in {-3,...,4} \draw (-0.05,\y) -- (0.05,\y) node[right] {\tiny\y};
    \draw (0.05,4) -- (-0.1,4) node[left] {{{$R_{12s}$}}};
    \draw (4,0.05) -- (4,-0.1) node[below] { $R_{12s}$};
    \draw (0.05,4.3) -- (-0.1,4.3) node[left] {{{$R_{{\rm d}}^{IR}$}}};
    \draw (4.3,0.05) -- (4.3,-0.1) node[below] { \hspace{0.5cm} $\;\;R_{{\rm d}}^{IR}$};

    \draw (0.05,3) -- (-0.1,3) node[left] { $R_{{\rm d}, 2}+R_s$};
    \draw (2.5,0.05) -- (2.5,-0.1) node[below] { $R_{{\rm d}, 1}+R_s$};
    
    %\fill[blue,opacity=0.3] (0,4) -- (4,0) -- (4,4) -- cycle;
    \fill[pattern=north east lines] (0,4.1) -- (0,3) -- (1.1,3);
    \fill[pattern=north east lines] (2.5,1.6) -- (2.5,0) -- (4.1,0);
    \fill[pattern=north east lines] (0.1,4) -- (4.5,4) -- (4.5,0.1) -- (4,0.1)-- cycle;    
    \fill[pattern=north east lines] (4,0) -- (4,1) -- (4.5,1) -- (4.5,0);    
    \fill[pattern=north east lines] (0,4) -- (0,4.5) -- (4.5,4.5) -- (4.5,4);    

    \fill[gray,opacity=0.6] (0,4.3) -- (4.5,4.3) -- (4.5,0) -- (4.3,0); 
    \fill[gray,opacity=0.6] (0,4.3) -- (0,4.5) -- (4.5,4.5) -- (4.5,4.3);    
%    \fill[blue,opacity=0.1] (0,0) -- (0, 3) -- (1,3) -- (2.5,1.5) --  (2.5,0) -- cycle;
        
%    \draw[dashed] (0,4) --  (4,0);
    \draw (0,3) -- (1,3);    
    \draw (2.5,1.5) --  (2.5,0);    
    %\draw (1,-3) -- (3,1) -- node[below left,sloped] {\tiny$2x_1-x_2\leq5$} (4.5,4);
    %\draw (-1,1) -- node[above,sloped] {\tiny$-x_1+2x_2\leq3$} (5,4);
    
\end{tikzpicture}
\caption{$R_{12s} = R_{{\rm d}, 1}+R_{{\rm d}, 2}+R_s$ Bob's decoding region $\mathcal Q_S^{\rm (B)}(2,1)$ for \ac{S-HARQ} with $K=2$ and $M=1$ (dashed area), and $\mathcal Q_S^{\rm (B)}(1,2)$ of \ac{IR-HARQ} (gray area).}
\label{fig2reg}
\end{figure}

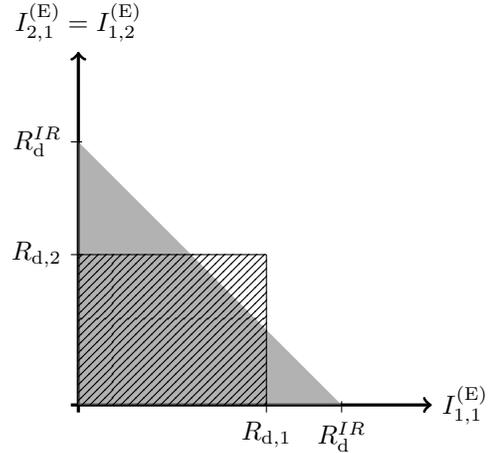
\begin{figure}
\centering
\begin{tikzpicture}

    %\draw[gray!50, thin, step=0.5] (-1,-3) grid (5,4);
    \draw[very thick,->] (-.1,0) -- (4.7,0) node[right] {{$I_{1,1}^{\rm (E)}$}};
    \draw[very thick,->] (0,-.1) -- (0,4.7) node[above] {$I_{2,1}^{\rm (E)} = I_{1,2}^{\rm (E)}$};

    %\fill[blue,opacity=0.9] (2.5,2) -- (2.5,1.5) -- (2,2) -- cycle;
    \fill[gray,opacity=0.6] (0,0) -- (0,3.5) -- (3.5,0) -- cycle;
    \fill[pattern=north east lines] (0,0) -- (0, 2) -- (2.5,2) --  (2.5,0) -- cycle;

    %\foreach \x in {-1,...,5} \draw (\x,0.05) -- (\x,-0.05) node[below] {\tiny\x};
    %\foreach \y in {-3,...,4} \draw (-0.05,\y) -- (0.05,\y) node[right] {\tiny\y};
    \draw (0.05,3.5) -- (-0.1,3.5) node[left] { $R_{{\rm d}}^{IR}$};
    \draw (3.5,0.05) -- (3.5,-0.1) node[below] { $R_{{\rm d}}^{IR}$};
    \draw (0.05,2) -- (-0.1,2) node[left] { $R_{{\rm d}, 2}$};
    \draw (2.5,0.05) -- (2.5,-0.1) node[below] { $R_{{\rm d}, 1}$};
    
    %\fill[blue!50!cyan,opacity=0.3] (8/3,1/3) -- (1,2) -- (13/3,11/3) -- cycle;

%    \draw[dashed] (0,4.5) --  (4.5,0);
    \draw (0,2) -- (2.5,2);    
    \draw (2.5,2) --  (2.5,0);    
    %\draw (1,-3) -- (3,1) -- node[below left,sloped] {\tiny$2x_1-x_2\leq5$} (4.5,4);
    %\draw (-1,1) -- node[above,sloped] {\tiny$-x_1+2x_2\leq3$} (5,4);

\end{tikzpicture}
\caption{ Eve's failure region $\mathcal Q_S^{\rm (E)}(2,1)$ for \ac{S-HARQ} with $K=2$ and $M=1$ (dashed area) and $\mathcal Q_S^{\rm (B)}(1,2)$ of \ac{IR-HARQ} (gray area).}
\label{fig3reg}
\end{figure}

\begin{figure}
\centering
\includegraphics[width=.8\hsize]{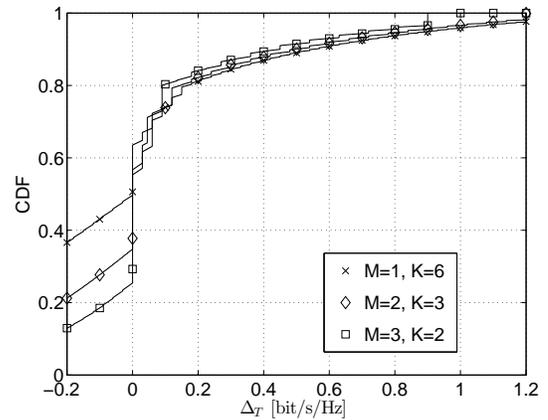}
\caption{\ac{CDF} of the difference between the throughputs of S-HARQ and IR-HARQ for various values of $\nslot$, over a Rice fading wiretap channel. }
\label{figres1}
\end{figure}

In order to further confirm this conclusion in a more general setting, we have considered \ac{S-HARQ} with a total of $6$ slots. Both Alice-Bob and Alice-Eve channels are block-fading, with independent Rice channels at each slot: the Rice factor is $0$\,dB for both and the average \ac{SNR} is $4$\,dB for the Alice-Bob channel and $5$\,dB, for the Alice-Eve channel. The performance of the proposed approach is assessed over block fading channels, by considering the achievable secret throughput, i.e.
\begin{equation}
\label{defT}
T = \max_{\{R_{\rm d,k}\}} \sum_{\iframe=1}^{\nframe} \sum_{\islot=1}^{\nslot} \frac{P_{\rm  D}(\iframe,\islot)[1-P_{\rm  s}(\iframe,\islot)] R_s}{\nslot(\iframe-1)+\islot}\,,
\end{equation}
which is the average (over the number of slots) of the secrecy rate divided by the number of slots needed for detection. As a reference value we consider the achievable secret throughput of the \ac{IR-HARQ} scheme $T^{\rm (IR)}$ and we focus on the additional secret throughput defined as $\Delta_T = T^{\rm (S)} - T^{\rm (IR)}$. Using (\ref{Po}), (\ref{PDkm}) and (\ref{Ps}) with equalities in (\ref{defT}) we obtain a lower bound on the achievable secret throughput. Note that we resort to the bounds since as exact performance is not known, although the difference of the lower bounds $\Delta_T$ may not in general be a bound of the actual difference. For various values of $\nslot$, Fig. \ref{figres1}  shows the \ac{CDF} of $\Delta_T$. Moreover, the total of 6 slots can be split into the following frame configurations: $a)$ $K=1$, $M=6$, $b)$ $K=2$, $M=3$, $c)$ $K=3$, $M=2$, $d)$ $K=6$, $M=1$.  Recall that $\nframe = 1, \nslot = 6$ corresponds to the IR-HARQ system. From the figure, we note that for all $\nslot < 6$, we have $\Delta_T > 0$ with non-zero probability, therefore there are cases when the diversity provided by different frames within the same total number of slots, yields a strictly higher throughput than choosing $\nframe=1$. Moreover, we observe that by varying $\nslot$ the distribution of $\Delta_T$ changes, thus leaving space for optimization of the system, to be considered in future studies.

\section{Conclusions}

We have proposed a secret message transmission scheme over block fading channels with a feedback from the legitimate receiver with no \ac{CSI}. Numerical results have highlighted the non-trivial relation with existing schemes and the fact that for some channel conditions the proposed solution provides a higher available secret throughput.

\section*{Acknowledgment}
This work was supported in part by the MIUR project ESCAPADE
(Grant RBFR105NLC) under the ``FIRB-Futuro in Ricerca 2010'' funding program.

\appendix
\label{proof_l5}

In this Appendix we provide the proof of Lemma \ref{l5}.

%\balance

For the sake of a simpler notation we provide the proof for $K' =K$ and $M'=M$, the generalization being straightforward. 

Without restriction we suppose $\mathcal M = i = 1$. Indicating with $T_{\nframe, \epsilon}^{nM}(X_\iframe Y_\iframe)$ the set of all  $\epsilon$-jointly weakly typical sequences $(\{X_1^{Mn}(i,j_1), \ldots, X_{\nframe}^{Mn}(i,j_\iframe)\}, \{Y_{1}^{Mn}, \ldots, Y_{\nframe}^{Mn}\})$, we can express ${\rm E}_{C_n}[{\rm P}_e(C_n|\bm{p})]$ in terms of the events
\begin{equation}
\mathcal E_{i,j_1, \ldots, j_\nframe} = \{\, (\{X_\iframe^{nM}(i,j_\iframe)\}, \{Y_\iframe^{nM}\}) \in \mathcal T_{\nframe, \epsilon}^{nM}(X_k Y_k)\,\} 
\label{defE}
\end{equation}
for $(i,j_1, \ldots, j_\nframe) \in [1, 2^{nR_s}] \times [1, 2^{nR_{{\rm d}, \iframe}}] \times  \cdots \times [1, 2^{nR_{{\rm d}, \iframe}}]$ as
\begin{equation}
\begin{split}
& {\rm E}_{C_n}[{\rm P}_e(C_n|\bm{p})] =  {\rm P}\left[ \mathcal E_{1,1, \ldots, 1}^c \cup  \right.
\\
& \left. \bigcup_{(1,j_1, \ldots, j_\nframe): \{\exists \iframe: j_\iframe \neq 1\}}  \mathcal E_{i,j_1, \ldots, j_\nframe} \cup \bigcup_{(i,j_1, \ldots, j_\nframe): i \neq 1}  \mathcal E_{i,j_1, \ldots, j_\nframe}\right] \\
\end{split}
% \cup  \mathcal F_{1, \ldots, 1}^c \cup \bigcup_{j_1, \ldots, j_\nframe \neq 1} \mathcal F_{j_1, \ldots, j_\nframe} \right]\,.
\label{Eunito}
\end{equation}
%\marker{Secondo me manca qualcosa, per questo ho messo minore uguale. Es: majority logic}

By the asymptotic equipartition property we have
\begin{equation}
{\rm P}\left[ \mathcal E_{1,1,\ldots, 1}^c \right] \leq \delta_\epsilon(n)\,.
\end{equation}

Indicating with with $T_{ \epsilon}^{nM}(X_\iframe Y_\iframe)$ the set of all  $\epsilon$-jointly weakly typical sequences $(X_\iframe^{Mn}(i,j_{\iframe}), Y_{\iframe}^{Mn})$, define the event
\begin{equation}
\mathcal F_{i,j,\iframe} = \{(X_\iframe^{nM}(i,j), Y_\iframe^{nM}) \in \mathcal T_{\epsilon}^{nM}(X_\iframe Y_\iframe)\} 
\end{equation}
for $(i,j,\iframe) \in [1, 2^{nR_s}] \times [1, 2^{nR_{{\rm d}, \iframe}}] \times [1, \nframe]$, then we have
\begin{equation}
{\rm P}\left[ \mathcal E_{i,j_1,\ldots, j_\nframe} \right] =\prod_\iframe {\rm P}\left[ \mathcal F_{i,j_\iframe,\iframe} \right]\,.
\label{pcap}
\end{equation}

We can split $\bigcup_{(1,j_1, \ldots, j_\nframe): \{\exists \iframe: j_\iframe \neq 1\}}  \mathcal E_{i,j_1, \ldots, j_\nframe}$ as the union of the events where an error occurs in at least one frame. Let $\mathcal S$ be the set of $\iframe$ for which error occurs, then we have
\begin{equation}
{\rm P}\left[\mathcal E_{i, j_\iframe\neq 1, k \in \mathcal S\,, j_u=1, u \in \mathcal K \setminus \mathcal S}\right] \leq 2^{-n(\sum_{k\in\mathcal S} {\rm I}(X_k^{Mn}; Y_k^{Mn}| p_{\iframe}) - \delta(\epsilon))}
\end{equation}
with $\mathcal K = \{1, 2, \ldots, \nframe\}$, and we have $2^{nR_{{\rm d}, \iframe}|\mathcal S|}-1$ of these events. Hence, the most restrictive condition on $R_{{\rm d}, \iframe}$ to have vanishing error probability is when only one of the $\nframe$ messages is different from 1 and in this case we have
\begin{equation}
{\rm P}\left[ \mathcal E_{1,j_1,\ldots, j_\nframe} \right] \leq 2^{-n({\rm I}(X_{\iframe}^{nM}; Y_{\iframe}^{nM}| p_{\iframe}) - \delta(\epsilon))}\,, \quad \exists ! \iframe: j_\iframe \neq 1
\end{equation}
and we have $2^{nR_{{\rm d}, \iframe}}-1$ of such events.

Event $ \mathcal E_{i,j_1, \ldots, j_\nframe}$, $i\neq 1$ occurs when $\mathcal F_{i,j,\iframe}$, $i \neq 1$ occurs for all $\iframe$, in which case from (\ref{pcap}) we have
\begin{equation}
{\rm P}\left[ \mathcal E_{1,j_1,\ldots, j_\nframe} \right] \leq 2^{-n(\sum_{\iframe=1}^\nframe {\rm I}(X_{\iframe}^{nM}; Y_{\iframe}^{nM}| p_{\iframe}) - \delta(\epsilon))}\,, \quad \exists ! \iframe: j_\iframe \neq 1
\end{equation}
and we have $(2^{nR_s}-1)2^{\sum_{\iframe=1}^\nframe R_{{\rm d}, \iframe}}$ of such events.

From (\ref{Eunito}) we have
\begin{equation*}
\begin{split}
& {\rm E}_{C_n}[{\rm P}_e(C_n|\bm{p})] \leq \\
\quad & \delta_\epsilon(n) + 
\sum_{\iframe=1}^\nframe (2^{nR_{{\rm d}, \iframe}}-1)  
2^{-n({\rm I}(X_{\iframe}^{nM}; Z_{\iframe}^{nM}| p_{\iframe})- \delta(\epsilon))} + \\
\quad & 2^{ \sum_{\iframe=1}^\nframe R_{{\rm d}, \iframe}} (2^{nR_s} -1) 2^{-n\sum_{\iframe=1}^\nframe ({\rm I}(X_{\iframe}^{nM}; Y_{\iframe}^{nM}| p_{\iframe}) - \delta(\epsilon))} \,.
\end{split}
\end{equation*}

Lastly, observing that 
\begin{equation}
{\rm I}(X_{\iframe}^{nM}; Y_{\iframe}^{nM}| p_{\iframe}) = \sum_{\islot=1}^{\nslot} {\rm I}(X_{\iframe,\islot}^n; Y_{\iframe,\islot}^n| p_{\iframe,\islot})\;,
\end{equation}
 we conclude the proof.

\bibliographystyle{IEEEtran}
\bibliography{biblio}

\end{document}